\documentclass[a4paper,fleqn]{cas-dc}

\usepackage[numbers]{natbib}
\usepackage{subcaption}

\usepackage{enumitem}
\usepackage{flushend}

\newtheorem{theorem}{Theorem}

\newtheorem{proposition}{Proposition}
\newdefinition{remark}{Remark}
\newdefinition{definition}{Definition}
\newproof{proof}{Proof}
\newdefinition{assumption}{Assumption}

\definecolor{myred}{rgb}{0.7,0.1,0.16}
\definecolor{myblue}{rgb}{0,0.32,0.7}
\definecolor{mygreen}{rgb}{0.133,0.545,0.133}

\graphicspath{{figures/}}

\begin{document}
\let\WriteBookmarks\relax
\def\floatpagepagefraction{1}
\def\textpagefraction{.001}

\title[mode = title]{Finite Boundary-Layer Residence Certificates for Non-Strict Control Barrier Functions}  
\shorttitle{Finite Boundary-Layer Residence Certificates for Non-Strict Control Barrier Functions}
\shortauthors{T. Han, G. Wang, and B. Wang}

\author[1]{Tianyu Han}
\ead{tianyu.han37@stu-mail.ccny.cuny.edu}
\credit{Methodology, Formal analysis, Investigation, Software, Validation, Visualization, Writing -- original draft}
            
\author[2]{Guangwei Wang}
\ead{gwwang@gzu.edu.cn}
\credit{Methodology, Formal analysis, Writing -- review and editing}

\author[1]{Bo Wang}\cormark[1]
\ead{bwang1@ccny.cuny.edu}
\credit{Conceptualization, Methodology, Formal analysis, Supervision, Project administration, Writing -- review and editing}

\affiliation[1]{organization={Department of Mechanical Engineering, The City College of New York, The City University of New York},
            city={New York},
            state={NY},
            postcode={10031},
            country={USA}}

\affiliation[2]{organization={School of Mechanical Engineering, Guizhou University},
            city={Guiyang},
            state={Guizhou},
            postcode={550025},
            country={China}}

\cortext[1]{Corresponding author.}

\nonumnote{\textit{E-mail addresses:} {tianyu.han37@stu-mail.ccny.cuny.edu} (T. Han), {gwwang@gzu.edu.cn} (G. Wang), {bwang1@ccny.cuny.edu} (B. Wang).}

\begin{abstract}
Non-strict control barrier function (CBF) conditions guarantee safety through forward invariance, but they do not preclude trajectories from remaining near the safe-set boundary for extended continuous time intervals. This paper develops a finite boundary-layer residence certificate for such settings. The certificate preserves the standard non-strict CBF safety condition and uses a bounded auxiliary function whose derivative is bounded away from zero in a prescribed boundary layer, yielding an explicit upper bound on every uninterrupted residence interval. For control-affine systems, a selected auxiliary branch is implemented as an additional affine constraint in a CBF-QP, and a tangential-input compatibility condition is given to ensure simultaneous feasibility with the hard CBF constraint for unconstrained inputs. A local-chart version handles angular or multi-valued auxiliary functions such as \(\operatorname{atan2}\). Single-integrator, double-integrator, and nonholonomic unicycle examples illustrate the resulting radial--tangential construction and its local-chart and feasibility limitations.
\end{abstract}

\begin{keywords}
Safety-critical control \sep
Control barrier functions \sep
Boundary-layer residence
\end{keywords}

\maketitle

\section{Introduction}\label{sec:introduction}

Control barrier functions (CBFs) provide a systematic framework for enforcing
safety through forward invariance of admissible sets
\cite{ames2017control}. By converting safety requirements into
derivative inequalities, CBFs have become a standard tool for safety-critical
controller synthesis, especially when combined with quadratic programs (QPs) \cite{xu2015robustness,JANKOVIC2018robust,alan2023control}.
The resulting CBF-based QP controllers have been applied to mobile robots, multi-agent
systems, robotic manipulation, and connected autonomous vehicles
\cite{Han2024Safety,Wang2025Further,jankovic2024multiagent}. In these
methods, the safety constraint is usually imposed as a hard affine constraint,
while stabilization or performance objectives are enforced through a nominal
controller or a relaxed control Lyapunov function (CLF) constraint.

The standard CBF condition is \textit{non-strict} \cite{ames2019control,krstic2023inverse}. 
It guarantees that trajectories cannot leave the safe set, but it does not prevent them from remaining near the safety boundary for a long continuous time. 
This distinction is important in obstacle-avoidance and robotic navigation problems. 
A robot may remain safe, in the sense of avoiding collision, while the active safety constraint blocks the nominal stabilizing direction and prevents task progress. 
Such behavior is often described as boundary sticking, deadlock, or loss of liveness \cite{jankovic2024multiagent,reis2020control,grover2020does}. Thus, forward invariance alone does not rule out prolonged near-boundary behavior.

Recent work has analyzed this issue through the closed-loop dynamics induced by
CBF-QP and CLF-CBF-QP controllers. It has been shown that the interaction
between hard CBF constraints and relaxed CLF objectives can generate undesired
equilibria near the safe-set boundary, with explicit conditions under which
such equilibria are asymptotically stable \cite{reis2020control}. Related
results characterize the existence and location of CBF-QP-induced equilibria
for nonlinear control-affine systems and propose modified QP formulations to
remove certain undesired equilibria while preserving forward invariance
\cite{Tan2024On}. In multi-agent collision avoidance, CBF-based controllers
have also been studied from the viewpoint of safety--liveness tradeoffs,
gridlock, and the stability of boundary equilibria
\cite{jankovic2024multiagent,grover2020does}. More recent analyses of
CBF-based safety filters further show that undesirable closed-loop phenomena
may include not only undesired equilibria but also unbounded trajectories and
limit cycles \cite{chen2024equilibria,mestres2026control}. These studies
clarify how particular boundary equilibria and related closed-loop pathologies
arise, and how some of them can be modified or removed.

What remains missing is a certificate for the duration of near-boundary motion itself. 
Equilibrium-based analyses can identify where boundary pathologies arise and, in some cases, modify the controller to remove particular undesired equilibria. 
However, the absence or instability of such equilibria does not rule out long transients, sliding-like motion along active safety constraints, recurrent near-boundary motion, or repeated conflicts between safety and performance objectives. 
These are trajectory-level phenomena. 
The relevant question is therefore not only whether a closed loop has an undesirable boundary equilibrium, but whether each continuous stay near the safe-set boundary can be explicitly bounded in time. 
The missing ingredient is a \textit{residence-time certificate}: a condition that works with the standard non-strict CBF inequality and nevertheless bounds every uninterrupted near-boundary residence interval.

A natural way to attack boundary residence is to make the barrier inequality
\textit{strict}, thereby forcing trajectories away from the boundary
\cite{wang2026universal}. 
This changes the safety mechanism itself. 
For nonholonomic or underactuated systems, such strictness may be structurally
unavailable at configurations where the input has no outward effect on the
barrier derivative. 
It is also awkward in QP-based safety filters: strict inequalities do not define closed feasible sets, and fixed positive margins can destroy feasibility or exclude otherwise safe controls. 
Thus, strict barrier conditions are useful when available, but they are not a general remedy for boundary-layer residence in standard CBF-QP implementations.

This paper instead keeps the usual non-strict CBF condition as the safety
constraint and adds an auxiliary condition only to certify residence time. 
The construction is inspired by Matrosov-type reasoning, where a non-strict
primary Lyapunov inequality is supplemented by auxiliary trajectory information
\cite{loria2005nested}. 
Our use is finite-interval rather than asymptotic. 
We choose an auxiliary function that is bounded on the relevant domain or local chart and impose, inside the prescribed boundary layer, a one-sided derivative bound. 
The auxiliary function therefore cannot evolve monotonically inside the layer for arbitrarily long, which yields an explicit upper bound on each uninterrupted boundary-layer residence interval.

The contributions of this paper are fourfold. 
First, we formulate a trajectory-level certificate for finite continuous residence in a prescribed boundary layer under the standard non-strict CBF condition. 
The certificate preserves forward invariance and gives an explicit upper bound on every uninterrupted residence interval through a bounded auxiliary function with a one-sided derivative bound. 
Second, we develop a local-chart version for angular or multi-valued auxiliary functions, such as $\operatorname{atan2}$, and specify the associated branch-cut and singularity limitations. 
Third, for control-affine systems, we enforce a selected auxiliary branch as an additional affine constraint in a CBF-QP. 
Fourth, we give a tangential-input compatibility condition ensuring that this auxiliary constraint can coexist with the hard CBF constraint for unconstrained inputs. 
Single-integrator, double-integrator, and unicycle examples demonstrate the radial--tangential construction, its QP implementation, and its local-chart and feasibility limitations.

\section{Safety and Boundary-Layer Residence}\label{sec:formulation}

\textit{Notation.} The Euclidean norm on $\mathbb{R}^n$ is denoted by $|\cdot|$. For any set $S\subset\mathbb{R}^n$, $\partial S$ and $\operatorname{Int}(S)$ denote its boundary and interior, respectively. The class $\mathcal{K}$ consists of continuous functions $\alpha:\mathbb{R}_{\ge0}\to\mathbb{R}_{\ge0}$ such that $\alpha(0)=0$ and $\alpha$ is strictly increasing. The class $\mathcal{K}_\infty$ consists of those functions in $\mathcal{K}$ that are unbounded. A continuous function $\alpha:\mathbb{R}\to\mathbb{R}$ is an extended class $\mathcal{K}_\infty$ function, denoted by $\alpha\in\mathcal{K}^e_\infty$, if $\alpha(0)=0$, $\alpha$ is strictly increasing, and $\lim_{s\to\pm\infty}\alpha(s)=\pm\infty$.

Consider the nonlinear time-varying system
\begin{equation}\label{eq:tv-system}
    \dot{x}=f(t,x),
\end{equation}
where $f:\mathbb{R}_{\ge0}\times\mathbb{R}^n\to\mathbb{R}^n$ is continuous in $t$ and locally Lipschitz in $x$. Let a continuously differentiable function $h:\mathbb{R}_{\ge0}\times\mathbb{R}^n\to\mathbb{R}$ define the time-varying \textit{safe set}
\begin{equation}
    \mathcal{C}(t):=\{x\in\mathbb{R}^n:h(t,x)\ge0\}.
    \label{eq:tv-safeset}
\end{equation}
The set $\mathcal{C}$ is \emph{forward invariant} for \eqref{eq:tv-system} if, for every initial time $t_0\ge0$ and every initial condition $x_0\in\mathcal{C}(t_0)$, the corresponding solution satisfies $x(t;t_0,x_0)\in\mathcal{C}(t)$ for all $t\ge t_0$ for which the solution exists. The system \eqref{eq:tv-system} is \emph{safe} with respect to $\mathcal{C}(\cdot)$ if $\mathcal{C}(\cdot)$ is forward invariant.

A standard way to verify safety is to certify forward invariance of the set $\mathcal C$ through a barrier function \cite{ames2019control}. 
\begin{definition}[Barrier function]\label{def:non-strict-barrier}\rm
A continuously differentiable function $h:\mathbb R_{\ge 0}\times\mathbb R^n\to\mathbb R$ is called a
\emph{non-strict barrier function} for \eqref{eq:tv-system} with respect to
$\mathcal C$ if zero is a regular value\footnote{Zero is a regular value
of $h(t,\cdot)$ if, for every $t$ under consideration and every
$x\in\partial\mathcal C(t)$, one has $\nabla_x h(t,x)\neq 0$.}
of $h(t,\cdot)$ and there exists $\alpha_h\in\mathcal K^e_\infty$ such that
the derivative of $h$ along \eqref{eq:tv-system} satisfies, for all
$t\ge 0$ and all $x\in\mathcal C(t)$,
\begin{equation}
    \dot h(t,x)
    :=
    \partial_t h(t,x)+L_fh(t,x)
    \ge
    -\alpha_h\big(h(t,x)\big).
    \label{eq:ZBF-tv}
\end{equation}
If the inequality in \eqref{eq:ZBF-tv} is strict at every boundary point, that is,
$\dot h(t,x)>0$ for all $t\ge 0$ and all $x\in\partial\mathcal C(t)$, then
$h$ is called a \emph{strict barrier function}.
\end{definition}

For a fixed thickness $\rho>0$, define the \textit{boundary layer}
\begin{equation}
    \Sigma_\rho(t)
    :=
    \{x\in\mathbb{R}^n:0\le h(t,x)\le\rho\}.
    \label{eq:boundary-layer-main}
\end{equation}
A compact interval $[t_a,t_b]$ is called an uninterrupted residence interval in
$\Sigma_\rho(\cdot)$ for a solution $x(\cdot)$ if
$x(t)\in\Sigma_\rho(t)$ for all $t\in[t_a,t_b]$.
Boundary residence therefore refers to residence in the layer
$\Sigma_\rho(t)$, not necessarily on the zero level set of $h$.
The non-strict barrier condition \eqref{eq:ZBF-tv} is a safety condition.
By the standard comparison argument, it guarantees forward invariance of
$\mathcal C(\cdot)$. However, it does not preclude uninterrupted residence
intervals of arbitrarily large length in $\Sigma_\rho(\cdot)$. Thus, although
\eqref{eq:ZBF-tv} certifies safety, it does not certify finite residence time
near the safe-set boundary.

A strict barrier condition can force trajectories away from the boundary, but
such strictness may be structurally infeasible for nonholonomic or underactuated
systems when the input cannot affect $\dot h$ in certain boundary
configurations. Strict inequalities are also inconvenient in QP-based safety
filters, which use closed affine constraints. This motivates the approach below:
we retain the standard non-strict CBF condition for safety and add an auxiliary
condition only to certify finite continuous residence time in
$\Sigma_\rho(\cdot)$.

\section{Residence Certification and QP Implementation}\label{sec:main-results}
\subsection{Finite Continuous Residence-Time Certificate}\label{subsec:certificate}

Inspired by Matrosov-type conditions~\cite{loria2005nested,ROUHABLAL}, we
augment the non-strict CBF condition with a bounded auxiliary function. In
Lyapunov theory, a non-strict inequality $\dot V(x)\le0$ may fail to imply
convergence because trajectories can remain where $\dot V(x)=0$.
Matrosov-type arguments address this by using auxiliary functions whose
derivatives reveal motion on this zero-decrease set and exclude nontrivial
invariant behavior.
Here the role is analogous, but the objective is finite residence time rather
than asymptotic convergence. The non-strict CBF condition guarantees forward
invariance but permits tangency and prolonged motion near the boundary. We
therefore require a bounded auxiliary function to have a derivative bounded
away from zero inside the prescribed boundary layer. Since such a function
cannot evolve monotonically for arbitrarily long while remaining bounded, this
condition yields a finite upper bound on every uninterrupted residence interval.

\begin{theorem}\label{thm:main}
Consider \eqref{eq:tv-system} and the safe set \eqref{eq:tv-safeset}. 
Let $h$ be a non-strict barrier function satisfying \eqref{eq:ZBF-tv}.
Fix $\rho>0$ and let $\Sigma_\rho(t)$ be defined by
\eqref{eq:boundary-layer-main}. Suppose that there exist a
continuously differentiable function
$W:\mathbb R_{\ge0}\times\mathbb R^n\to\mathbb R$, two constants $M>0$ and $\eta_\rho>0$, such that
\begin{equation}
    |W(t,x)|\le M,
    \quad
    \forall t\ge0,\quad
    \forall x\in\mathcal C(t),
    \label{eq:bounded-W-main}
\end{equation}
and
\begin{equation}
    |\dot W(t,x)|\ge \eta_\rho,
    \quad
    \forall t\ge0,\quad
    \forall x\in\Sigma_\rho(t).
    \label{eq:aux-derivative-main}
\end{equation}
Then, for every solution $x(\cdot)$ with $x(t_0)\in\mathcal C(t_0)$, the safe set
$\mathcal C(\cdot)$ is forward invariant on the interval of existence of the
solution. Moreover, if $[t_a,t_b]$ is an uninterrupted residence interval
contained in the interval of existence, with $t_a\ge t_0$, and
\begin{equation}
    x(t)\in\Sigma_\rho(t),
    \quad
    \forall t\in[t_a,t_b],
    \label{eq:residence-interval-condition}
\end{equation}
then
\begin{equation}
    t_b-t_a\le \frac{2M}{\eta_\rho}.
    \label{eq:10}
\end{equation}
Consequently, every uninterrupted residence interval in $\Sigma_\rho(t)$ is
finite, and eventual continuous residence in $\Sigma_\rho(t)$ is excluded.
\end{theorem}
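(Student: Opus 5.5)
The proof has two independent parts: forward invariance from the non-strict condition \eqref{eq:ZBF-tv}, and the residence bound from \eqref{eq:bounded-W-main}--\eqref{eq:aux-derivative-main}.

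\textbf{Forward invariance.} I would use the standard comparison argument. Along a solution starting in $\mathcal C(t_0)$, set $\psi(t):=h(t,x(t))$. This function is $C^1$ because $h$ is $C^1$ and $x(\cdot)$ is $C^1$. Whenever $x(t)\in\mathcal C(t)$, it satisfies $\dot\psi\ge-\alpha_h(\psi)$. Suppose $\psi$ became negative. Let $\tau$ be the last time before that moment with $\psi(\tau)=0$. Then $x(t)\in\mathcal C(t)$ cannot be invoked just after $\tau$, so a small extension of the inequality is needed. I would handle this in one of two ways. The first is to note that the standard argument (e.g.\ via Nagumo's theorem with the regular-value assumption, as in \cite{ames2019control}) applies verbatim to the time-varying set. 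The second is to compare with the scalar ODE $\dot y=-\alpha_h(y)$, $y(t_0)=\psi(t_0)\ge0$, whose solutions stay nonnegative. I expect to simply cite the standard result, since the statement frames it that way.

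\textbf{Residence bound.} Take an interval $[t_a,t_b]$ on which \eqref{eq:residence-interval-condition} holds, and set $w(t):=W(t,x(t))$. This is $C^1$ with $\dot w(t)=\dot W(t,x(t))$, where $\dot W=\partial_tW+L_fW$ is evaluated along $f$. By \eqref{eq:aux-derivative-main}, $|\dot w(t)|\ge\eta_\rho>0$ on all of $[t_a,t_b]$. Since $\dot w$ is continuous on the connected interval and never vanishes there, the intermediate value theorem forces it to have constant sign. This sign step is the key point: the hypothesis only bounds $|\dot W|$, so without it oscillation could not be ruled out.

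Suppose without loss of generality that $\dot w\ge\eta_\rho$. Integrating gives $w(t_b)-w(t_a)\ge\eta_\rho(t_b-t_a)$. Because $\Sigma_\rho(t)\subset\mathcal C(t)$, the bound \eqref{eq:bounded-W-main} gives $|w(t_a)|,|w(t_b)|\le M$, so the left side is at most $2M$. This yields \eqref{eq:10}.

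\textbf{Consequences.} Each uninterrupted residence interval therefore has length at most $2M/\eta_\rho$. Suppose the solution satisfied $x(t)\in\Sigma_\rho(t)$ for all $t\ge T$. Then any interval $[T,T+L]$ with $L>2M/\eta_\rho$ would contradict \eqref{eq:10}, so eventual continuous residence is excluded.

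The main subtlety I anticipate is the constant-sign argument, which relies on the continuity of $\dot W$ along the trajectory. That continuity follows from $W\in C^1$, $f$ continuous, and $x(\cdot)$ continuous.
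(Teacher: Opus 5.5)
Your proposal is correct and follows the paper's proof essentially step for step. You cite the standard comparison argument for forward invariance, use continuity of $t\mapsto\dot W(t,x(t))$ together with $|\dot W|\ge\eta_\rho$ to get a constant sign on $[t_a,t_b]$, integrate to obtain $|\Delta W|\ge\eta_\rho(t_b-t_a)$, and bound $|\Delta W|\le 2M$ via $\Sigma_\rho(t)\subset\mathcal C(t)$. One caveat: your alternative of comparing $\psi$ with $\dot y=-\alpha_h(y)$ is circular as written, because \eqref{eq:ZBF-tv} is only assumed on $\mathcal C(t)$, so citing the standard Nagumo-type result, as the paper does, is the right choice.
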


\begin{proof}
Since $h$ is a non-strict barrier function satisfying \eqref{eq:ZBF-tv},
the standard comparison argument implies that $\mathcal C(\cdot)$ is forward
invariant on the interval of existence of the solution.

Fix an uninterrupted residence interval $[t_a,t_b]$ satisfying
\eqref{eq:residence-interval-condition}. Then \eqref{eq:aux-derivative-main}
gives $|\dot W(t,x(t))|\ge \eta_\rho$ for all $t\in[t_a,t_b]$. Since
$t\mapsto \dot W(t,x(t))$ is continuous on $[t_a,t_b]$, it cannot change sign
on this interval; otherwise it would vanish at some time. Hence
$\dot W(t,x(t))$ is either at least $\eta_\rho$ or at most $-\eta_\rho$
throughout $[t_a,t_b]$.

Let $\Delta W:=W(t_b,x(t_b))-W(t_a,x(t_a))$. From the constant-sign property,
\begin{equation}
    |\Delta W|
    =
    \left|\int_{t_a}^{t_b}\dot W(t,x(t))\,dt\right|
    \ge
    \eta_\rho(t_b-t_a).
    \label{eq:W-variation-lower}
\end{equation}
On the other hand, since $x(t_a),x(t_b)\in\mathcal C(t)$ at the corresponding
times, \eqref{eq:bounded-W-main} gives
\begin{equation}
    |\Delta W|
    \le
    |W(t_b,x(t_b))|+|W(t_a,x(t_a))|
    \le
    2M.
    \label{eq:W-variation-upper}
\end{equation}
Combining \eqref{eq:W-variation-lower} and \eqref{eq:W-variation-upper}
yields $t_b-t_a\le 2M/\eta_\rho$.

Finally, if eventual continuous residence in $\Sigma_\rho(t)$ occurred, then
there would exist uninterrupted residence intervals of arbitrarily large
length, contradicting the bound above. This proves the result.\qed
\end{proof}

\begin{remark}
The boundedness assumption on $W$ can be localized. If a solution is known to
remain in a positively invariant subset $\mathcal D(t)\subset\mathcal C(t)$,
then it is enough to require $|W(t,x)|\le M$ on $\mathcal D(t)$, and the same
proof applies to residence intervals contained in $\mathcal D(t)$. This
localization principle is used below in the local-chart certificate, where the
auxiliary function may be single-valued and bounded only on a selected chart.
\end{remark}

\begin{remark}
The constant $\eta_\rho$ is fixed after the boundary-layer thickness $\rho$ is
chosen, but it may depend on $\rho$. Theorem~\ref{thm:main} bounds each
uninterrupted residence interval in $\Sigma_\rho(t)$ separately; it does not
bound cumulative residence time or the number of exits and re-entries. Additional
nominal or Lyapunov-based control objectives may reduce repeated re-entries in
practice, but this cumulative behavior is outside the scope of the certificate.
\end{remark}

The localization idea is useful for angular auxiliary functions, which may not
be globally single-valued. Once a chart is selected, the corresponding branch
can be treated as an ordinary continuously differentiable auxiliary function on
that chart.

\begin{proposition}\label{prop:local-chart}
Consider \eqref{eq:tv-system} and the safe set \eqref{eq:tv-safeset}. 
Let $h$ be a non-strict barrier function satisfying \eqref{eq:ZBF-tv}. 
Fix $\rho>0$ and let $\Sigma_\rho(t)$ be defined by
\eqref{eq:boundary-layer-main}. 
Let $U$ be a chart domain on which $W_U(t,x)$ is single-valued and continuously
differentiable. Suppose that there exist constants $M_U>0$ and
$\eta_{\rho,U}>0$ such that
\begin{equation}
    |W_U(t,x)|\le M_U,
    \qquad
    \forall t\ge0,\ \forall x\in\mathcal C(t)\cap U,
    \label{eq:local-chart-bounded}
\end{equation}
and
\begin{equation}
    |\dot W_U(t,x)|\ge \eta_{\rho,U},
    \qquad
    \forall t\ge0,\ \forall x\in\Sigma_\rho(t)\cap U.
    \label{eq:local-chart-derivative}
\end{equation}
Then $\mathcal C(\cdot)$ is forward invariant. Moreover, every uninterrupted
residence interval $[t_a,t_b]$ satisfying
$x(t)\in\Sigma_\rho(t)\cap U$ for all $t\in[t_a,t_b]$ obeys
\begin{equation}
    t_b-t_a \le \frac{2M_U}{\eta_{\rho,U}}.
    \label{eq:local-chart-bound}
\end{equation}
\end{proposition}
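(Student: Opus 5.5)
The plan is to reduce the statement to the argument already used for Theorem~\ref{thm:main}, applied on the chart rather than globally. Forward invariance of $\mathcal C(\cdot)$ does not involve $W_U$ at all. It follows exactly as in the first step of the proof of Theorem~\ref{thm:main}: since $h$ satisfies \eqref{eq:ZBF-tv} with $\alpha_h\in\mathcal K^e_\infty$, the standard comparison argument gives $h(t,x(t))\ge0$ on the interval of existence whenever $h(t_0,x(t_0))\ge0$. No chart information enters this part.

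For the residence bound, I fix an interval $[t_a,t_b]$ with $x(t)\in\Sigma_\rho(t)\cap U$ for all $t\in[t_a,t_b]$. The key observation is that on this interval the whole trajectory segment lies in $U$. Since $W_U$ is single-valued and $C^1$ there, the map $t\mapsto W_U(t,x(t))$ is a well-defined $C^1$ function on $[t_a,t_b]$. Its derivative is $\dot W_U(t,x(t))=\partial_tW_U+\nabla_xW_U\cdot f$, which is continuous in $t$. This is precisely where the chart matters: if the trajectory left $U$, a branch cut of, say, $\operatorname{atan2}$ could make the composition discontinuous, and the fundamental theorem of calculus would fail.

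On the interval, \eqref{eq:local-chart-derivative} gives $|\dot W_U(t,x(t))|\ge\eta_{\rho,U}$. By continuity and the intermediate value theorem, this derivative has constant sign on $[t_a,t_b]$. Hence
\begin{equation*}
|W_U(t_b,x(t_b))-W_U(t_a,x(t_a))|=\Big|\int_{t_a}^{t_b}\dot W_U\,dt\Big|\ge\eta_{\rho,U}(t_b-t_a).
\end{equation*}
Both endpoints lie in $\mathcal C(t)\cap U$ at their respective times, because $\Sigma_\rho(t)\subset\mathcal C(t)$. So \eqref{eq:local-chart-bounded} bounds the left side by $2M_U$, which yields \eqref{eq:local-chart-bound}. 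Equivalently, one may invoke the localization remark after Theorem~\ref{thm:main}, with $\mathcal D(t)=\mathcal C(t)\cap U$ restricted to the given interval. That set need not be positively invariant, however, so I will write out the direct argument rather than rely on the remark.

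The main obstacle, though a mild one, is justifying that the composition is $C^1$ and that the integral identity holds. This requires that $U$ be open, or at least that $W_U$ be $C^1$ on a neighborhood of the trajectory segment, so that the chain rule applies up to the endpoints $t_a$ and $t_b$. I will state this as the meaning of ``chart domain''. If needed, I will extend to the one-sided derivative at the endpoints, or pass to subintervals $[t_a+\epsilon,t_b-\epsilon]$ and let $\epsilon\to0$ using continuity of $W_U(t,x(t))$.
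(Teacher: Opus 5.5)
Your proposal is correct and follows essentially the same argument as the paper: forward invariance by the comparison argument, well-definedness and continuity of $t\mapsto \dot W_U(t,x(t))$ because the segment stays in $U$, constant sign of this derivative, and the $2M_U$ bound on the endpoint difference using $\Sigma_\rho(t)\subset\mathcal C(t)$. Your extra remark that $U$ should be open, or that $W_U$ should be $C^1$ on a neighborhood of the segment, makes explicit a regularity assumption the paper leaves implicit.
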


\begin{proof}
Since $h$ satisfies \eqref{eq:ZBF-tv}, the standard comparison argument for
barrier functions implies that $\mathcal C(\cdot)$ is forward invariant on the
interval of existence of every solution starting from $\mathcal C(t_0)$.
Now fix an interval $[t_a,t_b]$ satisfying the stated local-chart residence
conditions. Since $x(t)\in U$ on this interval, $W_U(t,x(t))$ is well-defined
and continuously differentiable along the trajectory. By
\eqref{eq:local-chart-derivative}, $|\dot W_U(t,x(t))|\ge\eta_{\rho,U}$ for all
$t\in[t_a,t_b]$. Continuity of $t\mapsto \dot W_U(t,x(t))$ implies that
$\dot W_U(t,x(t))$ cannot change sign on $[t_a,t_b]$. Hence
\begin{equation}
    \left|
    W_U(t_b,x(t_b))-W_U(t_a,x(t_a))
    \right|
    \ge
    \eta_{\rho,U}(t_b-t_a).
    \label{eq:local-W-lower}
\end{equation}
On the other hand, $x(t_a),x(t_b)\in\mathcal C(t)\cap U$ at the corresponding
times, so \eqref{eq:local-chart-bounded} gives
\begin{equation}
    \left|
    W_U(t_b,x(t_b))-W_U(t_a,x(t_a))
    \right|
    \le
    2M_U.
    \label{eq:local-W-upper}
\end{equation}
Combining \eqref{eq:local-W-lower} and \eqref{eq:local-W-upper} gives
\eqref{eq:local-chart-bound}.\qed
\end{proof}

Proposition~\ref{prop:local-chart} applies to angular coordinates such as
$\operatorname{atan2}$ after a branch is selected. It does not exclude residence
intervals that cross a branch cut or pass through multiple charts. A global
conclusion requires either a globally single-valued auxiliary function or an
additional chart-transition argument.

\subsection{Auxiliary-Augmented CBF-QP Implementation}
\label{subsec:control-affine-qp}

We next show how a selected auxiliary branch can be enforced in a
control-affine system. The purpose of this subsection is not to guarantee QP
feasibility in general; feasibility is addressed separately in
Section~\ref{subsec:compatibility}. Rather, the result below shows that whenever
the hard CBF constraint and the selected auxiliary constraint are feasible along
the closed-loop trajectory, the residence-time certificate of
Section~\ref{subsec:certificate} is inherited by the QP implementation.

Consider the control-affine system
\begin{equation}
    \dot x=f(t,x)+g(t,x)u,
    \label{eq:control-affine-system}
\end{equation}
where $x\in\mathbb R^n$ and $u\in\mathbb R^m$. 
For the function $h$, define
\begin{align*}
    a_h(t,x)&:=\partial_t h(t,x)+L_fh(t,x)+\alpha_h\big(h(t,x)\big),\\
    b_h(t,x)&:=L_gh(t,x)^\top.
\end{align*}
Then the non-strict CBF condition is equivalent to
\begin{equation}
    a_h(t,x)+b_h(t,x)^\top u\ge 0.
    \label{eq:affine-cbf-constraint}
\end{equation}
Let $W$ be an auxiliary function and define
\begin{align*}
    a_W(t,x)&:=\partial_t W(t,x)+L_fW(t,x),\\
    b_W(t,x)&:=L_gW(t,x)^\top.
\end{align*}
Fix a branch parameter $s\in\{1,-1\}$. The choice $s=1$ corresponds to the
positive branch $\dot W\ge\eta_\rho$, while $s=-1$ corresponds to the negative
branch $\dot W\le-\eta_\rho$. Both cases can be written as
\begin{equation}
    s\big(a_W(t,x)+b_W(t,x)^\top u\big)\ge \eta_\rho.
    \label{eq:selected-aux-condition}
\end{equation}
Equivalently, define
\begin{align*}
    a_s(t,x)&:=s\,a_W(t,x)-\eta_\rho,\\
    b_s(t,x)&:=s\,b_W(t,x).
\end{align*}
Then the selected auxiliary branch is the affine constraint
\begin{equation}
    a_s(t,x)+b_s(t,x)^\top u\ge 0.
    \label{eq:selected-aux-branch}
\end{equation}

Given a nominal input $u_{\rm nom}(t,x)$, the auxiliary-augmented CBF-QP is
defined pointwise by
\begin{equation}
\begin{aligned}
    k_s(t,x)&\in\arg\min_{u\in\mathbb R^m}\quad
    \frac12|u-u_{\rm nom}(t,x)|^2\\
    \text{s.t.}\quad
    &a_h(t,x)+b_h(t,x)^\top u\ge 0,\\
    &a_s(t,x)+b_s(t,x)^\top u\ge 0,
    \quad \text{if } x\in\Sigma_\rho(t).
\end{aligned}
\label{eq:auxiliary-cbf-qp}
\end{equation}
Outside $\Sigma_\rho(t)$, the auxiliary constraint is not needed for the
residence certificate. In implementation, it may be imposed on a slightly
larger layer, or through a smooth gate, to avoid abrupt activation. The
certificate only requires that the selected auxiliary inequality hold with a
positive margin whenever the trajectory is in $\Sigma_\rho(t)$.

\begin{proposition}
\label{prop:qp-enforced-certificate}
Consider \eqref{eq:control-affine-system} and the safe set
\eqref{eq:tv-safeset}. Fix $\rho>0$ and let $\Sigma_\rho(t)$ be defined by
\eqref{eq:boundary-layer-main}. Let $k_s$ be a feedback generated by
\eqref{eq:auxiliary-cbf-qp} with a fixed branch $s\in\{1,-1\}$. Suppose that
the QP is feasible along the closed-loop trajectory, that the corresponding solution exists\footnote{The same conclusion holds for Carath\'eodory solutions if $W(t,x(t))$ is
absolutely continuous and the QP inequalities hold almost everywhere along the
solution.}, and that $W$ is bounded by $M$ on $\mathcal C(t)$.
Then $\mathcal C(\cdot)$ is forward invariant. Moreover, every uninterrupted
residence interval in $\Sigma_\rho(t)$ has length at most $2M/\eta_\rho$.
\end{proposition}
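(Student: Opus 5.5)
The plan is to reduce the statement to Theorem~\ref{thm:main}, applied to the closed-loop vector field $F(t,x):=f(t,x)+g(t,x)k_s(t,x)$ along the given solution. Theorem~\ref{thm:main} is stated for a fixed vector field satisfying its hypotheses on all of $\mathcal C(t)$ and $\Sigma_\rho(t)$. Here, however, the hypotheses hold only along the trajectory, because feasibility is assumed only there. I would therefore not invoke the theorem verbatim. Instead I would repeat its two-part argument with the trajectory-wise inequalities supplied by the QP constraints; this is the same localization principle noted in the remark after Theorem~\ref{thm:main}.

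\textbf{Safety.} Because the QP is feasible along the solution, the hard constraint in \eqref{eq:auxiliary-cbf-qp} gives
\begin{equation*}
a_h(t,x(t))+b_h(t,x(t))^\top k_s(t,x(t))\ge 0 .
\end{equation*}
By the definitions of $a_h$ and $b_h$, this says that $\frac{d}{dt}h(t,x(t))\ge-\alpha_h(h(t,x(t)))$ for all $t$ in the interval of existence. The standard comparison lemma with $\alpha_h\in\mathcal K^e_\infty$ and $h(t_0,x(t_0))\ge0$ then yields $h(t,x(t))\ge0$, so $\mathcal C(\cdot)$ is forward invariant along this solution.

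\textbf{Residence bound.} Fix an uninterrupted residence interval $[t_a,t_b]$ in $\Sigma_\rho$. On this interval the auxiliary constraint is active, and since $a_W+b_W^\top u=\partial_tW+L_fW+L_gW\,u$ equals $\dot W$ along the closed loop, \eqref{eq:selected-aux-branch} gives
\begin{equation*}
s\,\tfrac{d}{dt}W(t,x(t))\ge\eta_\rho \quad\text{for } t\in[t_a,t_b].
\end{equation*}
Because the branch $s$ is fixed, this sign is known directly, so no continuity argument is needed to exclude a sign change. This is a simplification compared with Theorem~\ref{thm:main}. Integrating gives $s\,\Delta W\ge\eta_\rho(t_b-t_a)$. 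Since $x(t_a)$ and $x(t_b)$ lie in $\mathcal C$ by the safety step, boundedness of $W$ gives $|\Delta W|\le 2M$, and combining the two yields $t_b-t_a\le 2M/\eta_\rho$.

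\textbf{Main obstacle.} The only delicate point is regularity. The QP feedback $k_s$ may be discontinuous where the auxiliary constraint switches on at $\partial\Sigma_\rho$, and $k_s$ may fail to be Lipschitz, so $t\mapsto W(t,x(t))$ may not be $C^1$. Since a solution is assumed to exist, I would treat it as absolutely continuous. Then $W(t,x(t))$ and $h(t,x(t))$ are absolutely continuous compositions with $C^1$ functions, and the chain rule holds almost everywhere. Both the comparison argument and the integration step need the inequalities only almost everywhere, which also covers the Carath\'eodory case mentioned in the footnote.
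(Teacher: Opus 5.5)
Your proposal is correct and follows essentially the same route as the paper: the hard CBF constraint plus the comparison argument gives forward invariance, and on a residence interval the fixed-branch auxiliary constraint gives $s\,\dot W\ge\eta_\rho$ directly. Integrating that inequality and bounding the endpoint values by $|W|\le M$ yields $t_b-t_a\le 2M/\eta_\rho$, and your remarks on absolute continuity and the a.e.\ chain rule spell out what the paper covers in its footnote and its closing sentence on the Carath\'eodory case.
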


\begin{proof}
Along the closed-loop system $u=k_s(t,x)$, the first QP constraint gives
$\dot h(t,x,k_s(t,x))\ge-\alpha_h(h(t,x))$, and hence $\mathcal C(\cdot)$ is
forward invariant by the comparison argument. On any uninterrupted residence
interval in $\Sigma_\rho(t)$, the second QP constraint gives
$s\dot W(t,x(t),k_s(t,x(t)))\ge\eta_\rho$. Thus either
$\dot W\ge\eta_\rho$ on the whole interval or $\dot W\le-\eta_\rho$ on the whole
interval. Since $W$ is bounded by $M$ on $\mathcal C(t)$, the boundedness
argument of Theorem~\ref{thm:main} gives $t_b-t_a\le 2M/\eta_\rho$. The
Carath\'eodory case follows from the same integral argument.
\qed
\end{proof}

\vspace{-0.2cm}
Performance constraints, such as CLF constraints, may be added to
\eqref{eq:auxiliary-cbf-qp} and relaxed by slack variables. The CBF constraint remains
hard for safety, and the auxiliary constraint must remain hard with a positive
effective margin whenever the residence-time certificate is invoked. The branch
parameter $s$ is fixed on each certified residence interval; branch switching is
allowed only if a one-sided auxiliary derivative bound is preserved throughout
the interval.

\subsection{Compatibility of the CBF and Auxiliary Constraints}
\label{subsec:compatibility}

The auxiliary-augmented QP contains two affine constraints: the CBF
constraint for safety and the auxiliary constraint for residence certification.
These constraints are not automatically compatible. This subsection gives a
sufficient input-space algebraic condition under which the auxiliary constraint can be
satisfied without reducing the CBF margin.

\begin{proposition}
\label{prop:tangential-compatibility}
Consider the affine constraints
\begin{equation}
    a_h(t,x)+b_h(t,x)^\top u\ge 0,
    \quad
    a_s(t,x)+b_s(t,x)^\top u\ge 0.
    \label{eq:two-affine-constraints}
\end{equation}
Let $\mathcal N_\rho\subset\mathbb R_{\ge0}\times\mathbb R^n$ be a region
containing the prescribed boundary layer. Suppose that, for every
$(t,x)\in\mathcal N_\rho$, there exists $u_h(t,x)\in\mathbb R^m$ such that
\begin{equation}
    a_h(t,x)+b_h(t,x)^\top u_h(t,x)\ge 0.
    \label{eq:cbf-feasible-point}
\end{equation}
Suppose further that, for every $(t,x)\in\mathcal N_\rho$, there exists a
direction $d(t,x)\in\mathbb R^m$ and a constant $\kappa>0$ such that
\begin{equation}
    b_h(t,x)^\top d(t,x)=0,
    \quad
    b_s(t,x)^\top d(t,x)\ge \kappa.
    \label{eq:compatibility-direction}
\end{equation}
Then the two constraints in \eqref{eq:two-affine-constraints} are
simultaneously feasible for unconstrained inputs $u\in\mathbb R^m$ at every
$(t,x)\in\mathcal N_\rho$.
\end{proposition}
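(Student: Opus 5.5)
The plan is a direct construction at each fixed $(t,x)\in\mathcal N_\rho$: start from the CBF-feasible input $u_h$ and move along the tangential direction $d$. For a scalar $\lambda\ge0$ to be chosen, set
\[
u_\lambda := u_h(t,x)+\lambda\, d(t,x).
\]
Since $b_h^\top d=0$ by \eqref{eq:compatibility-direction}, the CBF expression is unchanged along this ray:
\[
a_h+b_h^\top u_\lambda = a_h+b_h^\top u_h \ge 0
\]
by \eqref{eq:cbf-feasible-point}, for every $\lambda$. So the first constraint holds, and its margin is not reduced, which is the sense of ``without reducing the CBF margin'' in the text.

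The auxiliary constraint becomes
\[
a_s+b_s^\top u_\lambda = \big(a_s+b_s^\top u_h\big)+\lambda\, b_s^\top d \ge \big(a_s+b_s^\top u_h\big)+\lambda\kappa .
\]
The right-hand side is affine and increasing in $\lambda$ with slope at least $\kappa>0$. An explicit admissible choice is
\[
\lambda^\star := \max\Big\{0,\ -\big(a_s+b_s^\top u_h\big)/\kappa\Big\},
\]
which gives $a_s+b_s^\top u_{\lambda^\star}\ge0$. Then $u_{\lambda^\star}$ satisfies both constraints in \eqref{eq:two-affine-constraints}, and since $(t,x)\in\mathcal N_\rho$ was arbitrary and $u$ is unconstrained in $\mathbb R^m$, the claim follows.

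There is no substantial obstacle. The only points needing care are that $u_{\lambda^\star}$ lies in $\mathbb R^m$, which is automatic because inputs are unconstrained, and that $\kappa>0$ is what allows division. Uniformity of $\kappa$ over $\mathcal N_\rho$ is not needed for pointwise feasibility; one could note in a remark that it yields a bound $\lambda^\star\le \max\{0,-(a_s+b_s^\top u_h)\}/\kappa$, which is relevant to the boundedness of the QP solution.
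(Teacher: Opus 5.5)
Your proof is correct and follows the paper's argument essentially verbatim: you perturb the CBF-feasible input $u_h$ along the CBF-neutral direction $d$, which leaves the CBF margin unchanged, and take $\lambda\ge\max\{0,-(a_s+b_s^\top u_h)/\kappa\}$ to satisfy the auxiliary constraint. Your side observation that pointwise feasibility does not need a uniform $\kappa$ also matches the paper's remark that $b_s^\top d>0$ suffices at a single point.
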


\begin{proof}
Fix $(t,x)\in\mathcal N_\rho$ and set $u=u_h+\lambda d$ with $\lambda\ge0$.
Because $b_h^\top d=0$, the CBF left-hand side is unchanged:
$a_h+b_h^\top u=a_h+b_h^\top u_h\ge0$. The auxiliary left-hand side satisfies
$a_s+b_s^\top u=a_s+b_s^\top u_h+\lambda b_s^\top d
\ge a_s+b_s^\top u_h+\lambda\kappa$. Hence the auxiliary constraint holds for
any
\begin{equation}
    \lambda
    \ge
    \max\left\{
    0,\,
    -\frac{a_s(t,x)+b_s(t,x)^\top u_h(t,x)}{\kappa}
    \right\}.
\end{equation}
Thus both affine constraints are feasible at $(t,x)$. \qed
\end{proof}

The direction $d(t,x)$ is CBF-neutral: it leaves the CBF constraint unchanged
while increasing the selected auxiliary constraint. For a single point, the
weaker condition $b_s(t,x)^\top d(t,x)>0$ is sufficient; the uniform lower bound
$\kappa>0$ gives a regional compatibility condition on $\mathcal N_\rho$.

In planar obstacle-avoidance problems, the condition \eqref{eq:compatibility-direction} has a radial--tangential interpretation. The CBF constraint regulates radial motion relative to the
obstacle, while the auxiliary constraint can be enforced through a tangential
input direction. This is the mechanism used in the examples of the next section.
With actuator limits or additional hard constraints, Proposition~\ref{prop:tangential-compatibility}
is no longer sufficient. For an admissible input set $\Omega_u(t,x)$, feasibility
must be checked by intersecting $\Omega_u(t,x)$ with the CBF and auxiliary
half-spaces. If the auxiliary constraint is softened, the CBF constraint should
remain hard, and the residence certificate is retained only when the effective
auxiliary margin remains positive.

\section{Illustrative Examples}\label{sec:examples}

The following examples illustrate how the auxiliary function can be chosen in
representative obstacle-avoidance problems. A universal construction of the auxiliary function $W$ is
not claimed; the auxiliary function and compatible input direction are system-
and geometry-dependent. In the planar examples below, the construction follows
a radial--tangential pattern: the CBF regulates radial safety relative to the
obstacle, while the auxiliary constraint induces tangential motion near the
boundary layer.
We consider a single integrator, a double integrator, and a nonholonomic
unicycle. The simulations illustrate trajectory-level residence behavior and
should be interpreted together with the stated local-chart and feasibility
conditions.

\subsection{Single-Integrator System}

Consider the planar single-integrator system
\begin{equation}
    \dot{\mathbf x}=\mathbf u,
    \label{eq:single-integrator}
\end{equation}
where $\mathbf x=(x_1,x_2)\in\mathbb R^2$ and $\mathbf u\in\mathbb R^2$. Let the
obstacle be the disk centered at $\mathbf c=(0,3)$ with radius $R=1.5$.
The safe set is
\begin{equation}
    \mathcal C
    =
    \{\mathbf x:h(\mathbf x)=|\mathbf x-\mathbf c|^2-R^2\ge0\}.
    \label{eq:safe_set_circle}
\end{equation}
We choose $\alpha_h(s)=2s$. Since $f=0$ and $g=I_2$, the CBF coefficients in
Section~\ref{subsec:control-affine-qp} are
\begin{align*}
    a_h(\mathbf x)=\alpha_h(h(\mathbf x)),\quad
    b_h(\mathbf x)=\nabla h(\mathbf x)=2(\mathbf x-\mathbf c).
\end{align*}
Thus the non-strict CBF constraint is
$a_h(\mathbf x)+b_h(\mathbf x)^\top\mathbf u\ge0$.

The auxiliary function is the angular coordinate around the obstacle,
$W(\mathbf x):=\operatorname{atan2}(x_2-c_2,x_1-c_1)$, interpreted on a selected
local branch. On this branch, $W$ is single-valued and continuously
differentiable away from $\mathbf x=\mathbf c$, and
\begin{equation}
    \dot W(\mathbf x,\mathbf u)
    =
    \frac{(x_1-c_1)u_2-(x_2-c_2)u_1}{|\mathbf x-\mathbf c|^2}.
    \label{eq:single-Wdot}
\end{equation}
Equivalently, with $a_W(\mathbf x)=0$ and $b_W(\mathbf x)=\nabla W(\mathbf x)$,
one has $\dot W(\mathbf x,\mathbf u)=a_W(\mathbf x)+b_W(\mathbf x)^\top
\mathbf u$.
We select the negative auxiliary branch and localize its effect near the safety
boundary by a smooth gating function. Let $\eta>0$ and $h_{\rm gate}>0$, and
define $\sigma(h):={h_{\rm gate}^2}/{(h_{\rm gate}^2+h^2)}$.
Then $\sigma(0)=1$, and $\sigma(h)$ decreases as $|h|$ increases. Hence the
auxiliary action is strongest near the boundary $h=0$ and weakens away from it.
The selected auxiliary condition is
$\dot W(\mathbf x,\mathbf u)\le-\eta\sigma(h(\mathbf x))$. In the notation of
Section~\ref{subsec:control-affine-qp}, this is the constraint
$a_s(\mathbf x)+b_s(\mathbf x)^\top\mathbf u\ge0$, where
\begin{align*}
    a_s(\mathbf x)=-\eta\sigma(h(\mathbf x)),\quad
    b_s(\mathbf x)=-b_W(\mathbf x).
\end{align*}
Thus, the gating function enters the QP through the scalar term $a_s$.

For controller implementation, we use a $\gamma m$-version CLF-CBF-QP formulation \cite{JANKOVIC2018robust,Han2024Safety,Wang2025Further}. 
This differs from the nominal-input CBF-QP in Section~\ref{subsec:control-affine-qp} only in
the performance part: instead of minimizing the deviation from a prescribed
nominal input, the CLF-CBF-QP uses a relaxed CLF constraint with a vector
slack variable $\delta$. The CBF and auxiliary constraints remain hard and have
the same affine form as in Section~\ref{subsec:control-affine-qp}. With
$V(\mathbf x)=3x_1^2+0.5x_2^2$ and $\alpha(|\mathbf x|)=0.5|\mathbf x|^2$, the controller is defined by
\begin{equation*}
\begin{aligned}
    (\mathbf u,\delta)&\in\arg\min_{\mathbf u,\delta}~
    \frac12\left(|\mathbf u|^2+m|\delta|^2\right)\\
    \text{s.t.}\quad
    &\gamma_f\big(L_fV(\mathbf x)+\alpha(|\mathbf x|)\big)
      +L_gV(\mathbf x)\mathbf u+L_gV(\mathbf x)\delta\le0,\\
    &a_h(\mathbf x)+b_h(\mathbf x)^\top\mathbf u\ge0,\\
    &a_s(\mathbf x)+b_s(\mathbf x)^\top\mathbf u\ge0,
\end{aligned}
\end{equation*}
where $m\ge1$, $\gamma\ge1$, and
$\gamma_f(s)=\gamma s$ for $s\ge0$ and $\gamma_f(s)=s$ for $s<0$.
The compatibility condition in Proposition~\ref{prop:tangential-compatibility}
is explicit in this example. Let $r=|\mathbf x-\mathbf c|$,
$e_r=(\mathbf x-\mathbf c)/r$, and $e_\phi=(-e_{r,2},e_{r,1})^\top$. Then
$b_h=2r e_r$ and, for the selected negative branch, $b_s=-r^{-1}e_\phi$.
Since $\mathbf u_h=0$ satisfies the CBF constraint on $\mathcal C$, choosing
$d=-r e_\phi$ gives $b_h^\top d=0$ and $b_s^\top d=1$. Hence, the CBF and auxiliary constraints are compatible in this case.

In the numerical implementation, we set
$m=5$, $\gamma=2$, $\eta=0.8$, and $h_{\rm gate}=0.12$. The simulations are run
from the initial conditions $(-1,4.5)$, $(0,4.9)$, $(2,4.5)$, $(3,4)$, and
$(2,3)$. The prescribed boundary layer is
$\Sigma_\rho=\{\mathbf x:0\le h(\mathbf x)\le\rho\}$ with $\rho=0.12$. For this
layer, define
\begin{equation}
    \sigma_\rho
    :=
    \inf_{0\le \zeta\le\rho}\sigma(\zeta)
    =
    \frac{h_{\rm gate}^2}{h_{\rm gate}^2+\rho^2},
    \quad
    \eta_\rho:=\eta\sigma_\rho .
    \label{eq:single-effective-eta}
\end{equation}
Thus, whenever $\mathbf x(t)\in\Sigma_\rho$, the hard auxiliary constraint gives
$\dot W(\mathbf x(t),\mathbf u(t))\le-\eta_\rho$. With the above parameters,
$\sigma_\rho=0.5$, $\eta_\rho=0.4$, and the residence bound associated with
$M_U=\pi$ is $2M_U/\eta_\rho\approx 15.71$ s. The principal branch of
$\operatorname{atan2}$ is discontinuous on
$\mathcal B:=\{\mathbf x:x_2=c_2,\ x_1<c_1\}$, so the local-chart certificate is
applied only to residence intervals that do not cross $\mathcal B$; no such
crossing is observed in the reported auxiliary simulations.

Figures~\ref{fig:single-integrator-trajectories} and
\ref{fig:single-integrator-h} compare the baseline and auxiliary-augmented
controllers. For the baseline controller, the upper boundary point
$\mathbf x^*=(0,4.5)$ is an undesired equilibrium of the closed-loop
system: at this point, the CLF descent direction points into the obstacle and is
blocked by the active CBF constraint. This equilibrium acts as a locally attracting undesired equilibrium\footnote{This is the same type of CLF-CBF-QP-induced boundary equilibrium studied in
\cite{reis2020control}, where such undesired equilibria are shown to arise on
the safe-set boundary and can be asymptotically stable under explicit curvature
conditions.}, which explains the long boundary-layer residence observed in
Fig.~\ref{fig:single-integrator-h}(a).
Over the simulated horizon, the longest recorded baseline residence interval is
$18.8$ s. In contrast, the auxiliary-augmented controller induces tangential
motion near the obstacle boundary and reduces the maximum observed
boundary-layer residence time to $4.05$ s, which is below the certified bound.

\begin{figure}
    \centering
    \includegraphics[width=\linewidth]{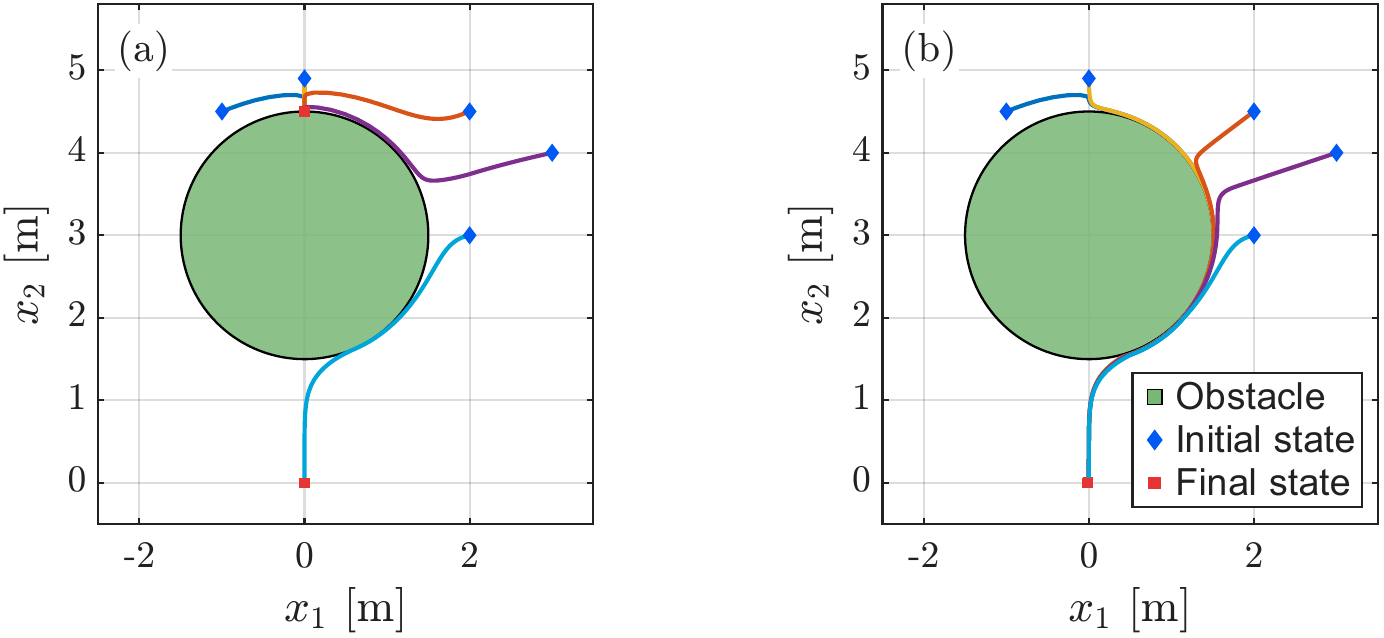}
    \caption{
Planar trajectories for the single-integrator obstacle-avoidance example.
Panel (a) shows the baseline CLF-CBF-QP controller, and panel (b)
shows the auxiliary-augmented CLF-CBF-QP controller. 
}
\label{fig:single-integrator-trajectories}
\end{figure}

\begin{figure}
    \centering
    \includegraphics[width=\linewidth]{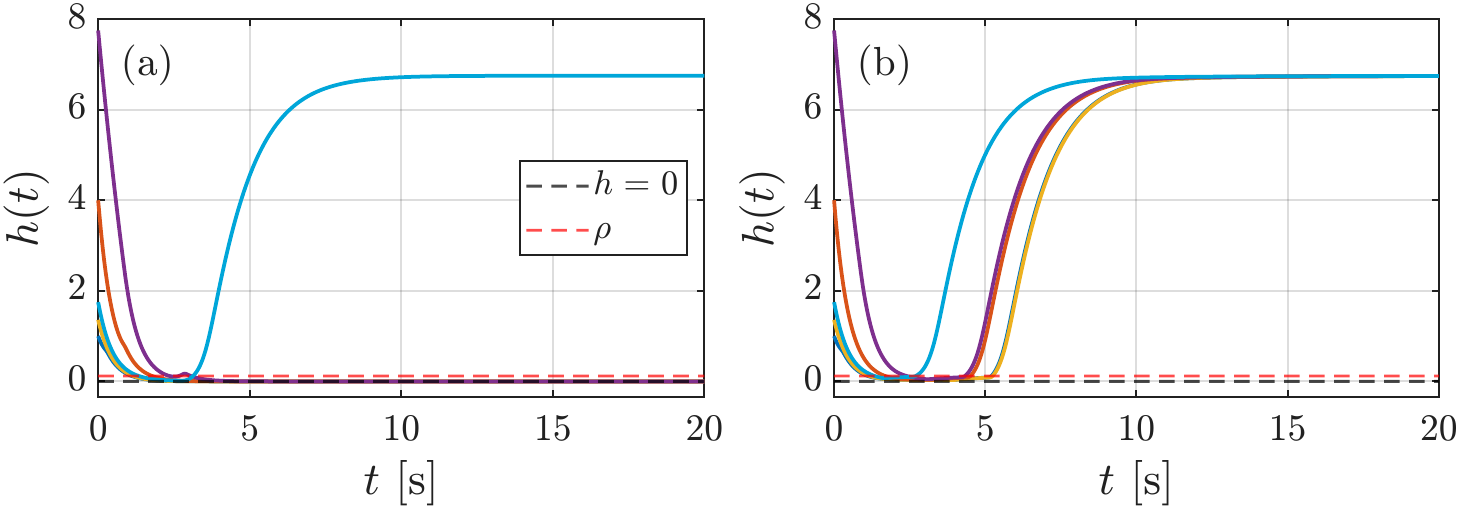}
\caption{
Barrier function values for the single-integrator example. Boundary-layer
residence occurs when $0\le h(t)\le\rho$. Panels (a) and (b) correspond to the
baseline and auxiliary-augmented $\gamma m$ CLF-CBF-QP controllers,
respectively.
}
\label{fig:single-integrator-h}
\end{figure}

\subsection{Double-Integrator System}

Consider the planar double-integrator system
\begin{equation}
    \dot{\mathbf x}=\mathbf v,\qquad \dot{\mathbf v}=\mathbf u,
    \label{eq:double-integrator}
\end{equation}
where $\mathbf x,\mathbf v,\mathbf u\in\mathbb R^2$. We use the same circular
obstacle as in \eqref{eq:safe_set_circle}, with
$h(\mathbf x)=|\mathbf x-\mathbf c|^2-R^2$. Since $h$ has relative degree two,
define $h_1(\mathbf x,\mathbf v):=\dot h(\mathbf x,\mathbf v)+\alpha_1h(\mathbf x)$,
where $\dot h=2(\mathbf x-\mathbf c)^\top\mathbf v$. The HOCBF condition is $\dot h_1(\mathbf x,\mathbf v,\mathbf u)\ge -\alpha_2h_1(\mathbf x,\mathbf v)$.
Let $\mathbf p=\mathbf x-\mathbf c$. Since
$\dot h_1=2|\mathbf v|^2+2\mathbf p^\top\mathbf u+\alpha_1\dot h$, this
constraint is affine in $\mathbf u$: $a_{h_1}(\mathbf x,\mathbf v)+b_{h_1}(\mathbf x)^\top\mathbf u\ge0$,
where $a_{h_1}=2|\mathbf v|^2+\alpha_1\dot h+\alpha_2h_1$ and
$b_{h_1}=2\mathbf p$. If
$h(\mathbf x_0)\ge0$ and $h_1(\mathbf x_0,\mathbf v_0)\ge0$, then the
position-level safe set is forward invariant.
Moreover, along trajectories satisfying $h_1(t)\ge0$, one has
$\dot h(t)\ge -\alpha_1 h(t)$. Hence, the position-level function $h$ satisfies
the non-strict barrier inequality required for the residence certificate.

For the auxiliary function, we use the velocity heading
$W(\mathbf v)=\operatorname{atan2}(v_2,v_1)$ on a selected local branch. Away
from $\mathbf v=0$, $\dot W(\mathbf v,\mathbf u)={(v_1u_2-v_2u_1)}/{|\mathbf v|^2}$.
We do not use the singular coefficient $1/|\mathbf v|^2$ in the QP. Instead,
inside the boundary layer
$\Sigma_\rho=\{(\mathbf x,\mathbf v):0\le h(\mathbf x)\le\rho\}$, we impose the
scaled affine condition $v_1u_2-v_2u_1\ge \eta_0|\mathbf v|^2$.
Equivalently, $a_s(\mathbf v)+b_s(\mathbf v)^\top\mathbf u\ge0$, where
$a_s(\mathbf v)=-\eta_0|\mathbf v|^2$ and
$b_s(\mathbf v)=(-v_2,v_1)^\top$. At $\mathbf v=0$, this scaled constraint is
inactive.

The nominal control is
$\mathbf u_{\rm nom}=-k_p\mathbf x-k_v\mathbf v$.
The auxiliary-augmented safe filter is
\begin{equation*}
\begin{aligned}
    \mathbf u\in\, &\arg\min_{\mathbf u}~
    \frac12|\mathbf u-\mathbf u_{\rm nom}|^2\\
    \text{s.t.}\quad
    &a_{h_1}(\mathbf x,\mathbf v)+b_{h_1}(\mathbf x)^\top\mathbf u\ge0,\\
    &a_s(\mathbf v)+b_s(\mathbf v)^\top\mathbf u\ge0,
      \quad \text{if } 0\le h(\mathbf x)\le\rho .
\end{aligned}
\end{equation*}
Outside $\Sigma_\rho$, the auxiliary constraint is inactive. The baseline HOCBF filter is obtained by omitting the auxiliary constraint. The compatibility condition in Proposition~\ref{prop:tangential-compatibility}
can be checked explicitly for the scaled auxiliary constraint. Let
$\mathbf p^\perp=(-p_2,p_1)^\top$. In the boundary layer, $\mathbf p\neq0$, and
the HOCBF half-space is nonempty for unconstrained inputs. Moreover,
$b_{h_1}=2\mathbf p$ and $b_s=(-v_2,v_1)^\top$. If
$\mathbf p^\top\mathbf v\neq0$, choose
$d=\operatorname{sgn}(\mathbf p^\top\mathbf v)\mathbf p^\perp/|\mathbf p^\top\mathbf v|$.
Then $b_{h_1}^\top d=0$ and $b_s^\top d=1$. Hence the HOCBF and auxiliary
constraints are compatible for unconstrained inputs at such points. When
$\mathbf p^\top\mathbf v=0$, this sufficient CBF-neutral direction condition
does not apply, and feasibility must be checked pointwise.

The residence certificate is local. On any uninterrupted residence interval in
$\Sigma_\rho$ where the selected velocity-heading branch is not crossed and
$|\mathbf v(t)|$ stays bounded away from zero, the auxiliary constraint gives
$\dot W\ge\eta_0$. Hence Proposition~\ref{prop:local-chart} applies with
$M_U=\pi$, and each certified uninterrupted residence interval satisfies $t_b-t_a\le {2M_U}/{\eta_0}$.
This certificate is conditional on $|\mathbf v(t)|$ staying away from zero and
does not apply to intervals that cross the selected branch cut.

\begin{figure}
    \centering
    \includegraphics[width=\linewidth]{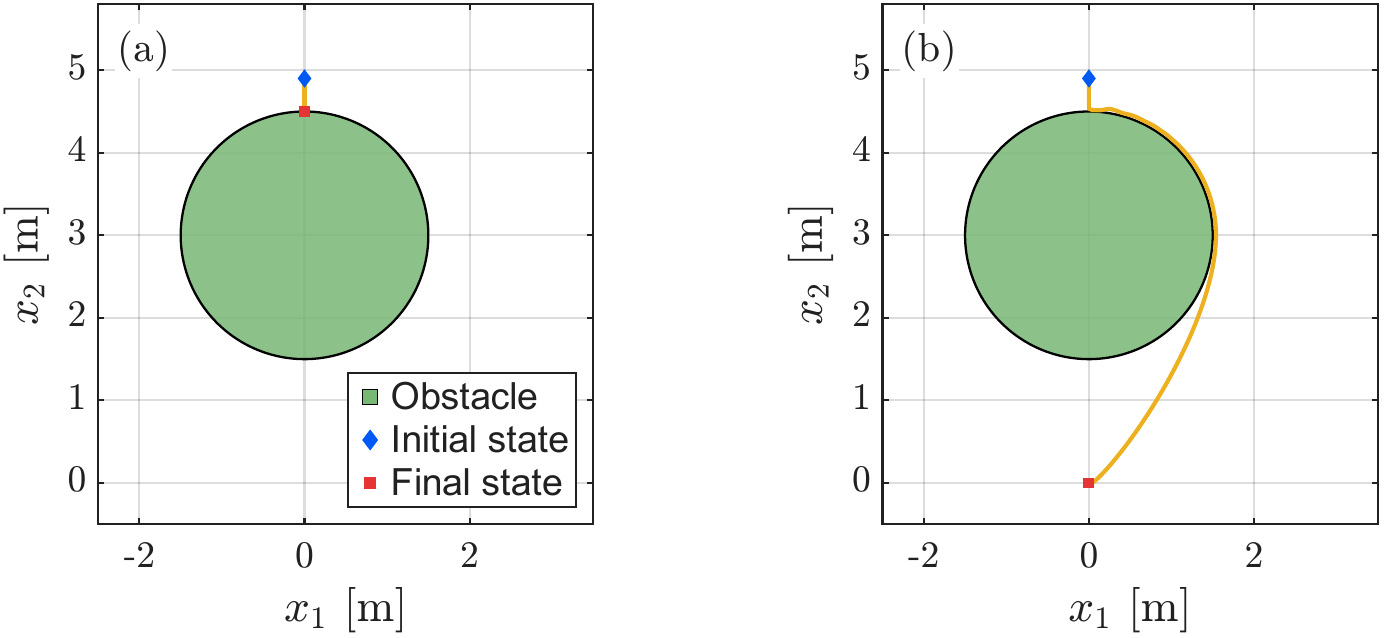}
    \caption{
Planar trajectories for the double-integrator obstacle-avoidance example.
Panel (a) shows the baseline HOCBF-QP safety filter, and panel (b) shows the
boundary-layer auxiliary-augmented HOCBF-QP safety filter.
}
\label{fig:double-integrator-trajectories}
\end{figure}

\begin{figure}
    \centering
    \includegraphics[width=\linewidth]{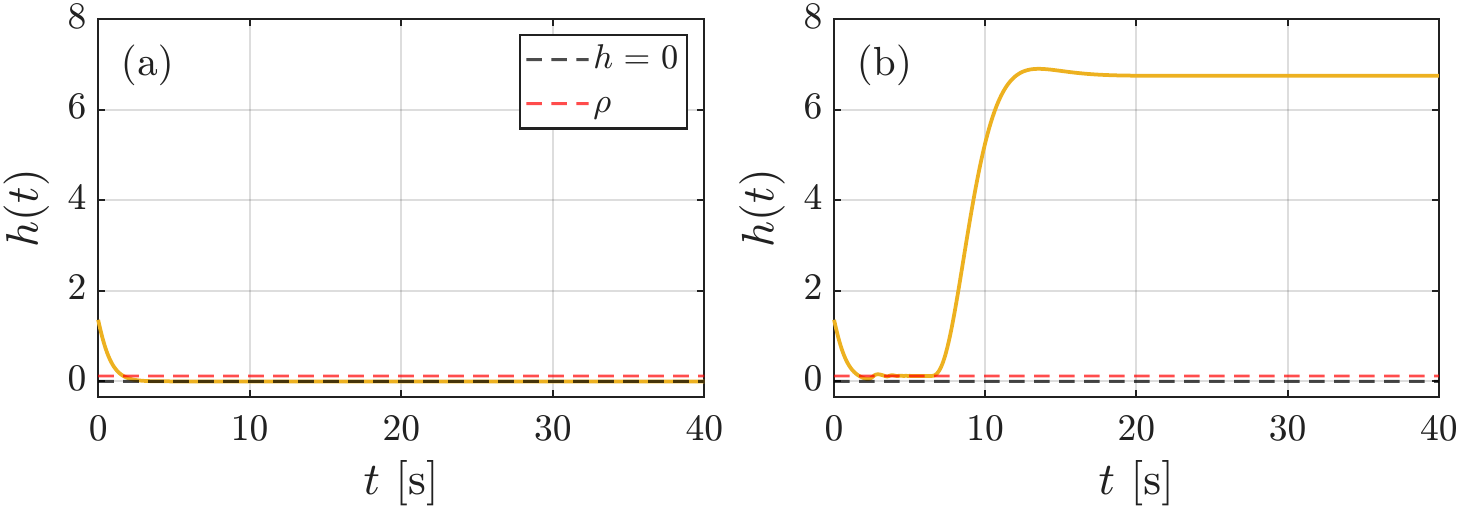}
    \caption{
Barrier function values for the double-integrator example. Boundary-layer
residence occurs when $0\le h(t)\le\rho$. Panels (a) and (b) correspond to the
baseline and boundary-layer auxiliary-augmented HOCBF-QP safety filters,
respectively.
}
\label{fig:double-integrator-h}
\end{figure}

In the simulation, we set $\alpha_1=\alpha_2=2$, $k_p=0.5$, $k_v=1.2$, $\rho=0.12$, $\eta_0=0.8$, and $T=40$ s. The initial
condition is $\mathbf x_0=(0,4.9)$ and $\mathbf v_0=(0,-0.4)$. 
For the baseline controller, the trajectory enters the boundary layer at
approximately $1.67$ s and remains there until the end of the
$40$ s simulation horizon\footnote{This is the same type of symmetry-induced
deadlock behavior discussed in \cite{grover2020does}, where the
nominal stabilizing direction and the active safety constraint can balance on the safe-set boundary and prevent tangential progress.}.
For the auxiliary-augmented filter, all sampled QPs are feasible, no velocity-heading branch cut crossing is observed, and all recorded residence intervals are
certified under the selected local chart. The maximum certified uninterrupted
residence interval is $0.94$ s, which is below the bound
$2\pi/\eta_0\approx7.85$ s. The minimum speed over the certified residence
intervals is $0.063$, above the threshold $0.05$ used in the certificate check.

\subsection{Nonholonomic Unicycle}

Consider the nonholonomic unicycle
\begin{equation}
    \dot x=v\cos\theta,\quad
    \dot y=v\sin\theta,\quad
    \dot\theta=\omega,
    \label{eq:unicycle}
\end{equation}
where $\mathbf q=(x,y,\theta)\in\mathbb{R}^3$ and $\mathbf u=(v,\omega)$. The obstacle is the
same disk used above, with center $\mathbf c=[0,3]^\top$ and radius $R=1.5$.
Let $\mathbf z=(x,y)^\top$, $\mathbf p=\mathbf z-\mathbf c$, and
$h(\mathbf z)=|\mathbf p|^2-R^2$. With
$e_\theta=(\cos\theta,\sin\theta)^\top$, one has
$\dot h=2\mathbf p^\top e_\theta v$. We use $\alpha_h(h)=2h$, so the CBF
constraint is $a_h+b_h^\top\mathbf u\ge0$, where
$a_h=2h$ and $b_h=(2\mathbf p^\top e_\theta,0)^\top$.

The nominal controller is the polar stabilizer \cite{Aicardi1995closed} toward
$(0,0,0)$. Let $r_g=|\mathbf z|$,
$\psi_g=\operatorname{atan2}(-y,-x)$, and
$\alpha=\psi_g-\theta$ on the principal branch. We set
\begin{align}
    v_{\rm nom}
    &=
    k_\rho r_g\cos\alpha,\\
    \omega_{\rm nom}
    &=
    k_\alpha\alpha
    +
    k_\rho\operatorname{sinc}(2\alpha)(\alpha+\lambda\psi_g),
    \label{eq:unicycle-nominal}
\end{align}
where $\operatorname{sinc}(s)=\sin(s)/s$ for $s\neq0$ and
$\operatorname{sinc}(0)=1$.

For the auxiliary function, let
$\phi=\operatorname{atan2}(p_2,p_1)$ and
$\chi=\theta-\phi$ on a selected local branch. Define
$W(\mathbf q)=\operatorname{atan}(k_W\chi)$. For $|\mathbf p|>0$,
\begin{equation}
    \dot W
    =
    c_\chi\left(-\frac{\sin\chi}{|\mathbf p|}v+\omega\right),
    \quad
    c_\chi=\frac{k_W}{1+(k_W\chi)^2}.
    \label{eq:unicycle-Wdot}
\end{equation}
Inside the boundary layer $0\le h\le\rho$, we impose the hard auxiliary
condition $\dot W\ge\eta_0$. Thus $a_s+b_s^\top\mathbf u\ge0$, where
$a_s=-\eta_0$ and
$b_s=c_\chi(-\sin\chi/|\mathbf p|,1)^\top$. Outside the boundary layer, this
auxiliary constraint is inactive.

The baseline and auxiliary-augmented safety filters are both implemented as
\begin{equation*}
\begin{aligned}
    \mathbf u\in\;&\arg\min_{\mathbf u}~
    \frac12|\mathbf u-\mathbf u_{\rm nom}|^2\\
    \text{s.t.}\quad
    &a_h+b_h^\top\mathbf u\ge0,\\
    &a_s+b_s^\top\mathbf u\ge0,
      \quad \text{if } 0\le h\le\rho .
\end{aligned}
\end{equation*}
For the baseline controller, the last constraint is removed.

The compatibility condition in Proposition~\ref{prop:tangential-compatibility}
is immediate for this scaled auxiliary constraint. In the boundary layer,
$\mathbf p\neq0$ and $c_\chi>0$. Choosing
$d=(0,c_\chi^{-1})^\top$ gives $b_h^\top d=0$ and $b_s^\top d=1$. Hence the
CBF and auxiliary half-spaces are compatible for unconstrained inputs.
The residence certificate is local. On any uninterrupted residence interval in
$\Sigma_\rho$ that remains in the selected angular chart, the auxiliary
constraint gives $\dot W\ge\eta_0$. Since
$|W(\mathbf q)|=|\arctan(k_W\chi)|\le \pi/2$, Proposition~\ref{prop:local-chart}
applies with $M_U=\pi/2$. Hence, each certified uninterrupted residence interval
satisfies $t_b-t_a\le {2M_U}/{\eta_0}={\pi}/{\eta_0}$.
This certificate is conditional on remaining in the selected angular chart and on the pointwise feasibility of the auxiliary constraint. 

In the simulation, we used $k_\rho=0.8$, $k_\alpha=2.5$, $\lambda=0.8$,
$k_W=0.8$, $\eta_0=0.4$, $\rho=0.12$, and $T=15\,{\rm s}$ with time step
$0.01\,{\rm s}$. The initial states were $(0,4.9,-\pi/2)$,
$(1.6,4.5,-0.8\pi)$, $(2.5,3.8,\pi)$, and $(1.5,2.2,-2.2)$. The corresponding
trajectories and barrier values are shown in Figs.~\ref{fig:unicycle-trajectory}
and~\ref{fig:unicycle-h}. Both controllers keep the trajectories safe and all
sampled QPs are feasible. However, the baseline CBF-QP exhibits prolonged motion along the obstacle
boundary. The auxiliary-augmented CBF-QP steers the trajectories around the
obstacle and then toward the origin. In the auxiliary simulations, all sampled
QPs are feasible, no selected angular branch cut crossing is observed, and the
maximum sampled residence time is reduced to $0.97\,{\rm s}$, which is below the
local-chart bound $\pi/\eta_0\approx 7.85\,{\rm s}$, while safety is maintained
with $h_{\min}=0.023$.

\begin{figure}
    \centering
    \includegraphics[width=0.95\linewidth]{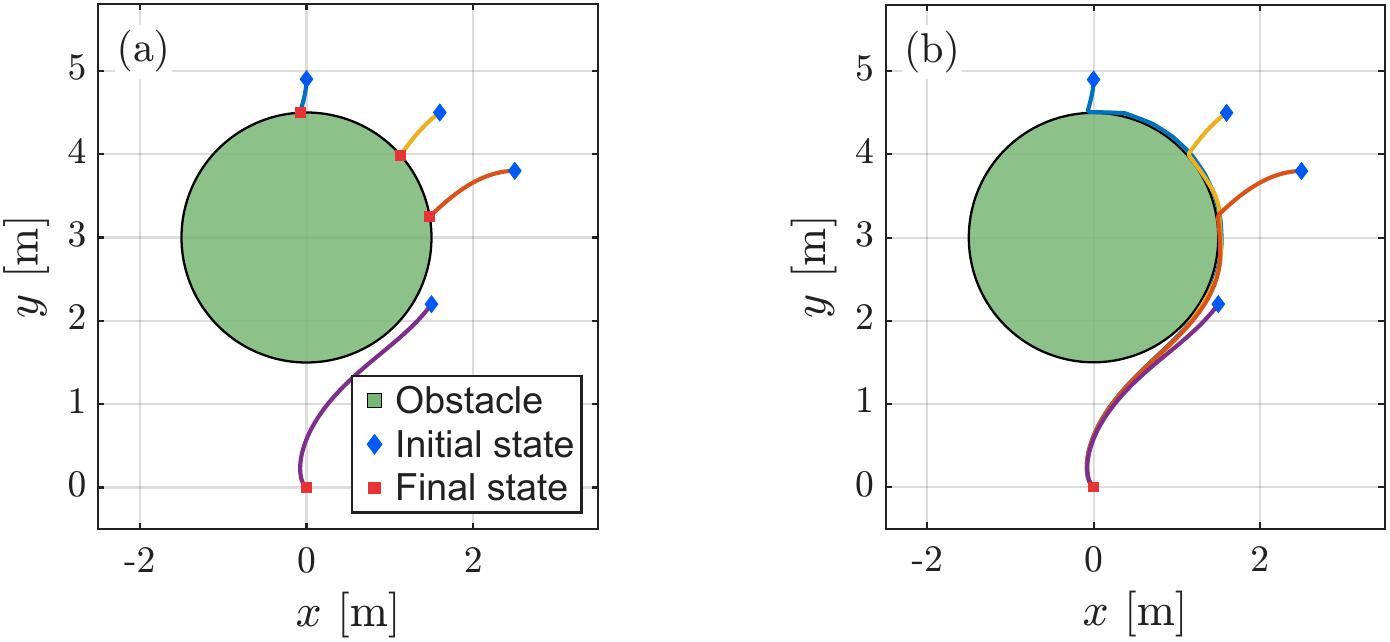}
    \caption{Planar trajectories for the unicycle obstacle-avoidance example. Panel (a) shows the baseline CBF-QP controller, and panel (b) shows the auxiliary-augmented CBF-QP controller.}
    \label{fig:unicycle-trajectory}
\end{figure}

\begin{figure}
    \centering
    \includegraphics[width=0.95\linewidth]{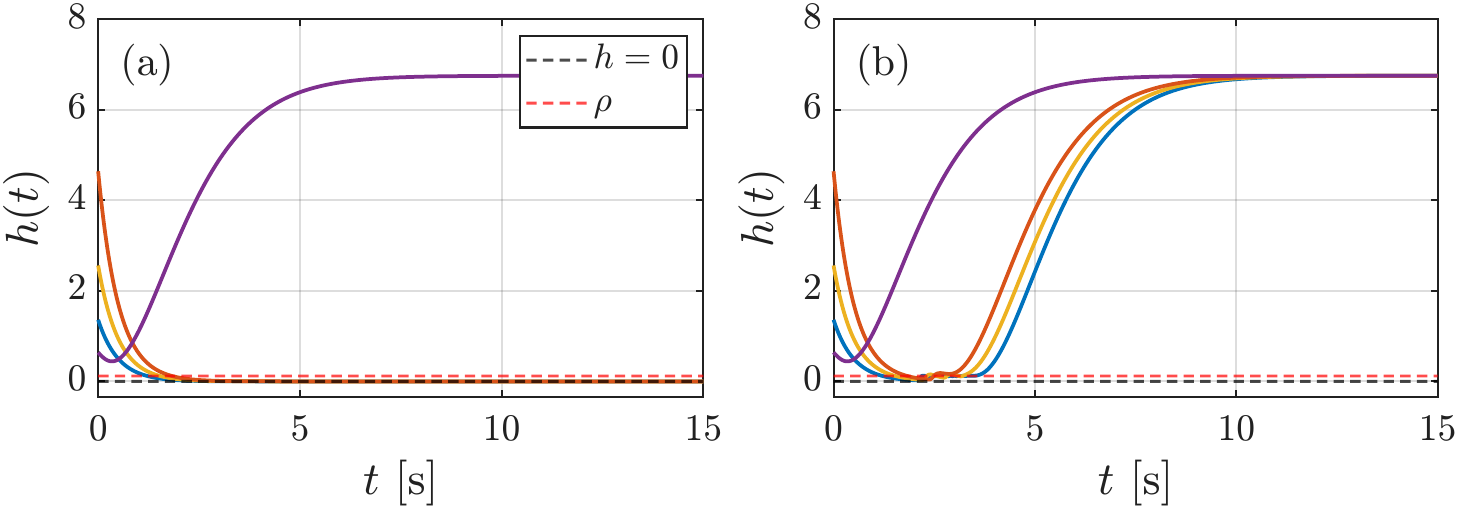}
    \caption{Barrier function values for the unicycle example. Boundary-layer residence occurs when $0\le h(t)\le\rho$. Panels (a) and (b) correspond to the baseline and auxiliary-augmented CBF-QP controllers, respectively.}
    \label{fig:unicycle-h}
\end{figure}

\section{Conclusion} \label{sec:conclusion}

This paper developed a finite continuous boundary-layer residence certificate
for systems satisfying non-strict CBF conditions. The certificate preserves the
standard forward-invariance guarantee and adds a bounded auxiliary function
whose one-sided derivative bound in a prescribed boundary layer gives an
explicit upper bound on every uninterrupted residence interval. Thus, the result
addresses trajectory-level near-boundary residence without replacing the
non-strict CBF condition by a strict barrier condition or requiring a
classification of boundary equilibria. 
For control-affine systems, the auxiliary condition was implemented as an
additional affine constraint in a CBF-QP. A tangential-input compatibility
condition was given to ensure simultaneous feasibility with the hard CBF
constraint for unconstrained inputs. Local-chart versions handle angular or
multi-valued auxiliary functions, such as $\operatorname{atan2}$, provided the
certified residence interval remains in the selected chart and avoids branch
cuts or singularities. Single-integrator, double-integrator, and unicycle
examples illustrated the resulting radial--tangential construction and the
associated residence-time bounds.
Future work will address
cumulative residence bounds, robustness margins, input-constrained feasibility,
chart-transition arguments, and systematic construction of auxiliary functions.

\printcredits




\bibliographystyle{model1-num-names}
\bibliography{v1-main}

\end{document}